\titleformat{\section}[block]{\large\scshape\centering}{\thesection.}{1em}{} 
\titleformat{\subsection}[block]{\large}{\thesubsection.}{1em}{} 
\newtheorem{theo}{Theorem}[section]
\newtheorem{rem}{Remark}[section]
\title{Impact studies of nationwide measures COVID-19 anti-pandemic: compartmental model and machine learning}
\author{\thanks{Support of the Non Linear Analysis, Geometry and Applications (NLAGA) Project}
	Mouhamadou A.M.T. Bald\'e$^a$, Coura Bald\'e$^b$ and Babacar M. Ndiaye$^a$\\
	University of Cheikh Anta Diop.  \\ 
	BP 45087, 10700. Dakar, Senegal.\\
	$^a$ Laboratory of Mathematics of Decision and Numerical Analysis (LMDAN).\\	
	Department of Mathematics of Decision(DMD)-FASEG.\\
	\href{mailto:mouhamadouamt.balde@ucad.edu.sn}{mouhamadouamt.balde@ucad.edu.sn}\\
	\href{mailto:babacarm.ndiaye@ucad.edu.sn}{babacarm.ndiaye@ucad.edu.sn}\\
	$^b$ Laboratory of Applied Mathematics (LMA)-FST\\
	\href{mailto:coura.balde@ucad.edu.sn}{coura.balde@ucad.edu.sn}\\
}	
\date{}
\begin{document}

\maketitle
	
\section{Introduction}
\label{sec:intro}
The COVID-19 pandemic is testing the entire world so that measures are being taken in most nations to stem its development. These measures can generally be of different kinds such as social distancing, partial or total confinement, etc. It would therefore be interesting to be able to effectively analyze the effects of the measures taken on the spread of the pandemic. Here we offer an analysis of the impact of the measures taken that we apply to the case of Senegal. In previous papers \cite{Balde:2020}, \cite{Ndiayeetal:2020} and \cite{Ndiayeetal2:2020}  we proposed a start of study using the results of \cite{LMSW:2020}.\\ 
In \cite{LMSW:2020}, a useful method for the study of the evolution of COVID-19 pandemic has been resented by using a compartmental model with Susceptible, Infected asymptomatic, Infected reported symptomatic and unreported symptomatic (SIRU). In \cite{Balde:2020}, the author use that method to study the COVID-19 spread in Senegal with a classical SIR model.\\
In this work we aim to analyze the impact of the anti pandemic measures taken in Senegal. It is a continuation of the work done in \cite{Balde:2020}. We do a two-step analysis. The first uses the differential equations model denoted SIRU introduced in \cite{LMSW:2020}, and the second uses two machine learning tools: Predict of Wolfram Mathematica using Neural Networks method and Prophet.  \\
We conduct the work in the following way. In the section \ref{sec:analys} we study the effect of the nationwide measures by using an epidemic model presented in \cite{LMSW:2020} and we present some machine learning tools. We show, in the section \ref{sec:simul}, the numerical results. In the section \ref{sec:discuss}, we discuss the results. Then in section \ref{sec:app}, we perform an analysis of the model and explain the parameters estimation. Finally in the section \ref{sec:conclusion}, we end by making a conclusion and advancing perspectives.
\section{Analysis}
\label{sec:analys}
\subsection{Analysis of the measures}
\label{subsec:measures}
In \cite{Balde:2020} a classical SIR model was studied using results from the paper \cite{LMSW:2020}. The aim was to analyze the effect of the nationwide measures using the data after nationwide measures. In fact, throughout $ t_{0} $ to $ T $, we fitted an exponential function to the data of the total cases of infection of this period. $ T $ represents the date of the nationwide measures, and $ t_{0} $ is the starting time of the epidemic. We consider that the effects of the measures are such that they lead to a reduction in the contact rate. To describe this reduction, we chose a slowly decreasing function over time. We consider that the measures taken are not strong enough to systematically drop the contact rate to $  0  $. This new function corresponds to the first one and the data in the period before measures from $ t_{0} $ to $ T $, then takes a slower trajectory than that of the first function after $ T $. In other words, from the date $ T $, the new curve goes under the old one.
We consider that if on the dates $ t> T $ the data goes under the new curve obtained with a contact rate after measures then, we can say that these measures affect the evolution of the pandemic.\\
In the previous paper \cite{Balde:2020}, the function we fitted to the data from $2020$ March $02$ to March $31$ by least square method, is $TNI(t)=b\exp(c t)-a$ with $a=13.9324,\ b=9.61779$ and $c=0.100095$ (figure \ref{fig:fitfunc1old}). In this paper we fit a new exponential function to the data from $2020$ March $02$ to April $25$ (figure \ref{fig:fitfunc1}). For this new function, we have $a = 99.9214,\ b = 81.325,\ c = 0.0371767$.
\begin{figure}[H]
	\centering
	\subfloat[Fit with data of the total cases: $TNI(t)$ is the blue line and the data are the red dotted. From $2020$ March $02$ to March $31$.\label{fig:fitfunc1old}]{\includegraphics[width=0.4\linewidth]{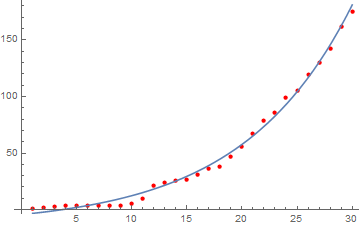}}\qquad
	\subfloat[Fit with data of the total cases: $TNI(t)$ is the blue line and the data are the red dotted. From $2020$ March $02$ to April $25$.\label{fig:fitfunc1}]{\includegraphics[width=0.4\linewidth]{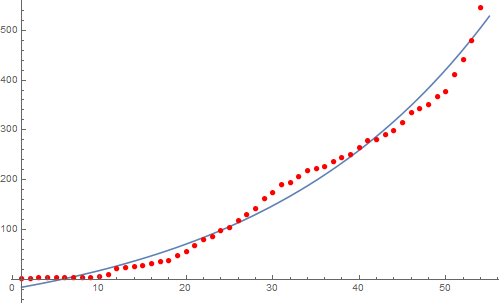}}\qquad\qquad
	\caption{Plot of the exponential curves fitting the total number of case Senegal's data.}
	\label{fig:tntsirdata}
\end{figure}	
\noindent To continue in our analysis, we will use a differential equations model introduced in \cite{LMSW:2020}. This model is as follows:
\begin{equation}
\left\{ \begin{array}{l}
\displaystyle \frac{dS}{dt}=-\beta S(t) (I(t)+I_{U}(t))\\\\

\displaystyle \frac{dI}{dt}=\beta S(t)( I(t)+I_{U}(t)) - \nu I(t)\\\\

\displaystyle \frac{dI_{R}}{dt}=\gamma\nu I(t)-\eta I_{R}(t)\\\\

\displaystyle \frac{dI_{U}}{dt}=(1-\gamma)\nu I(t)-\eta I_{U}(t)
\end{array}
\right.
\label{sir1}
\end{equation}

\noindent The initial conditions are $S(t_{0} )=S_{0}\geq 0,\ I(t_{0} )=I_{0}\geq 0$ $I_{R}(t_{0} )=I_{R0}\geq 0$, and $I_{U}(t_{0} )=I_{U0}\geq 0$ and $t_{0}$ the initial time of the epidemic. With $S(t)$ represents the susceptible individuals, $I$ the infected asymptomatic individuals, $I_{U}$ the unreported infected symptomatic and $I_{R}$ the reported infected symptomatic individuals. \\

\begin{figure}[H]
	\centering
	\includegraphics[width=0.75\linewidth]{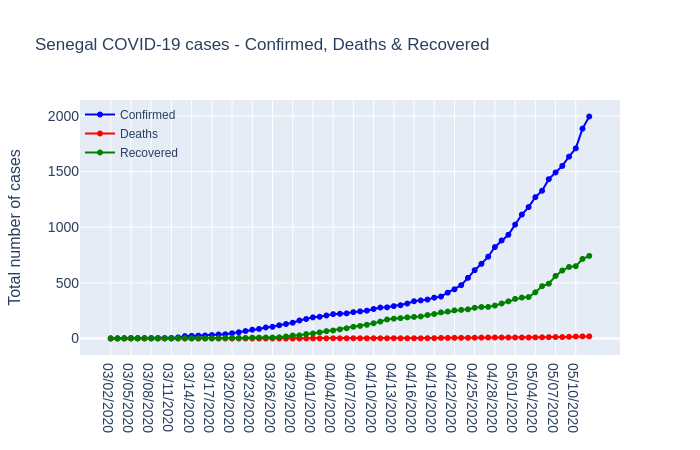}
	\caption{Plot of data of total confirmed, death and recovered from $2020$ March $02$ to May $10$.}
	\label{fig:allcasesdata}
\end{figure}	
We can also consider a version with removed compartments, the recovered $R$ and the death $D$:
\begin{equation}
\left\{ \begin{array}{l}
\displaystyle \frac{dS}{dt}=-\beta S(t) (I(t)+I_{U}(t))\\\\

\displaystyle \frac{dI}{dt}=\beta S(t)( I(t)+I_{U}(t)) - \nu I(t)\\\\

\displaystyle \frac{dI_{R}}{dt}=\gamma\nu I(t)-\eta I_{R}(t)\\\\

\displaystyle \frac{dI_{U}}{dt}=(1-\gamma)\nu I(t)-\eta I_{U}(t)\\\\

\displaystyle \frac{dR}{dt}=\alpha\eta (I_{R}(t)+I_{U}(t))\\\\

\displaystyle \frac{dD}{dt}=(1-\alpha)\eta (I_{R}(t)+I_{U}(t))
\end{array}
\right.
\label{sir2}
\end{equation} 
\noindent Let's present the parameters. $\beta$ is the contact rate. $1/\nu$ is the average time during which asymptomatic infectious are asymptomatic. $\gamma$ is the fraction of asymptomatic infectious individuals that become reported symptomatic infectious. $1/\eta$ is the average time symptomatic infectious have symptoms. $\gamma\nu$ is the rate at which asymptomatic infectious become reported symptomatic. $(1-\gamma)\nu$ is the rate at which asymptomatic infectious become symptomatic but unreported. $\alpha$ is the proportion of recovered and $1-\alpha$ is the proportion of death due to the infection.\\

\noindent We make the assumption that the function we fit the total number of reported infected cases is given by $\displaystyle \gamma\nu\int_{t0}^{t} I(s) ds$. \\
The starting time of the pandemic, the initial conditions and some parameters of the model are estimated and others are fixed. The parameters set are $\gamma$, $1/\nu$ and $1/\eta$. The values chosen for $1/\nu$ and $1/\eta$ are those used by the medical authorities i.e. $1/7$. Regarding the proportion of unreported and reported, we also use the same value proposed in \cite{LMSW:2020}, ie $80\%$ are reported and $20\%$ are unreported. Further detailed studies could make it possible to estimate the good value of this proportion for Senegal where we apply the model. For the estimation of the  parameters, the starting time and the initial conditions, we use the calculation methods detailed in \cite{LMSW:2020} and \cite{Balde:2020}. We start with the fitting exponential function which we equalize with the integral formula above $\displaystyle TNI(t)=b\exp(c t)-a=\displaystyle \gamma\nu\int_{t0}^{t} I(s) ds$. Hence some calculation of derivations, replacement in the model and integration gives:
\begin{itemize}
	\item  $\displaystyle t_{0}=\frac{ln(a)-ln(b)}{c}$.     
	\item $\displaystyle I(t)=I(t_{0})\exp(c(t-t_{0}))$, with $\displaystyle I(t_{0})=I_{0}=\frac{a c}{\gamma\nu}$.
	\item $\displaystyle I_{U}(t)=I_{U}(t_{0})\exp(c(t-t_{0}))$, with $\displaystyle I_{U}(t_{0})=I_{U0}=\frac{(1-\gamma)a c}{\gamma(\eta+c)}=\frac{(1-\gamma)\nu}{\eta+c}I_{0}$.
	\item $\displaystyle I_{R}(t)=I_{R}(t_{0})\exp(c(t-t_{0}))$, with $\displaystyle I_{R}(t_{0})=I_{0}-I_{U0}=\frac{a c}{\eta+c}=\frac{\gamma\nu}{\eta+c}I_{0}$.
	\item $\displaystyle R(t)=\frac{\alpha\eta\nu(I(t)-I_{0})}{c(c+\eta)}$.
	\item $\displaystyle D(t)=\frac{(1-\alpha)\eta\nu(I(t)-I_{0})}{c(c+\eta)}$.
	\item $\displaystyle \beta=\frac{c+\nu}{S_{0}}\frac{\eta+c}{(1-\gamma)\nu+\eta+c}$.
	\item $\displaystyle \mathcal{R}_{0}=\frac{c+\nu}{\nu}\frac{\eta+c}{(1-\gamma)\nu+\eta+c}(1+\frac{(1-\gamma)\nu}{\eta})$.\\
\end{itemize}      
For more details see section \ref{sec:app}.\\

\noindent We consider that after the measures taken at the time $T$, the contact rate depends on time following a formula we choose. 
We use two formulas. One of them was introduced in \cite{Balde:2020}, and the second one was proposed in \cite{LMSW2:2020}. 
The first one is :
\begin{equation}
\tilde{\beta}(t)=\left\{
\begin{array}{ll}
\displaystyle\beta & \textrm{if } t\in [t_{0},\ T]\\
\displaystyle\beta (\frac{T}{t})^{\delta/p}& \textrm{if } t>T, 
\end{array}
\right.
\label{beta1}
\end{equation}
where $\delta$ and $p$ are parameters to choose.
\noindent The second one is:
\begin{equation}
\tilde{\beta}(t)=\left\{
\begin{array}{ll}
\displaystyle\beta & \textrm{if } t\in [t_{0},\ T]\\
\displaystyle\beta \exp(-\varphi(t-T))& \textrm{if } t>T, 
\end{array}
\right.
\label{beta3}
\end{equation}
where $\varphi$ is a parameter to choose.   \\
Then the new model to  solve is:
\begin{equation}
\left\{ \begin{array}{l}
\displaystyle \frac{dS}{dt}=-\tilde{\beta} S(t) (I(t)+I_{U}(t))\\\\

\displaystyle \frac{dI}{dt}=\tilde{\beta} S(t)( I(t)+I_{U}(t)) - \nu I(t)\\\\

\displaystyle \frac{dI_{R}}{dt}=\gamma\nu I(t)-\eta I_{R}(t)\\\\

\displaystyle \frac{dI_{U}}{dt}=(1-\gamma)\nu I(t)-\eta I_{U}(t)
\end{array}
\right.
\label{sir3}
\end{equation}
By solving \eqref{sir3}, we get $\tilde{I}(t)$ from which we can calculate the total number of infected with the formula $\displaystyle \tilde{TNI}(t)=\gamma\nu\int_{t0}^{t} \tilde{I}(s) ds$. We choose values of parameters of $\tilde{\beta}$ such that $\tilde{TNI}(t)$ follows the same path that $TNI(t)$. 
Then, by way of parameters $\delta$ and $p$ in $\tilde{\beta}$, we can evaluate the measures. We do parametric solve with respect to parameters $\delta$ and $p$. Results are shown in figure \ref{fig:siru0}. 

\noindent Now we consider at time $T_{2}$, the nation applies stronger measures. We simply interpret as the contact rate $\tilde{\beta}$ is reduced by a factor $\phi\in [0,1]$. When the value of  $\phi$ is equal to $0$, it means that there are no measures, while when $\phi$ is equal to $1$, it means that the strongest measures are taken. Hence the measures are quantified with the values of proportion $\phi$. Then the contact rate become:
\begin{equation}
\tilde{\beta}_{2}(t)=\left\{
\begin{array}{ll}
\displaystyle\tilde{\beta}(t) & \textrm{if } t\in [t_{0},\ T_{2}]\\
\displaystyle(1-\phi)\tilde{\beta}(T_{2})& \textrm{if } t>T_{2}, 
\end{array}
\right.
\label{beta4}
\end{equation}         
We solve the new model with $\tilde{\beta}$ replaced by $\tilde{\beta}_{2}$. We do a parametric solve with respect to the parameter $\phi$, and we plot the result for different values of $\phi$. Then, we show different values of the peak,  depending on the values of $\phi$. Then, we can evaluate the maximal value of infection peak with respect to the level of the measures. Results are shown in figures \ref{fig:siru11}, \ref{fig:siru12}, \ref{fig:siru31} and \ref{fig:siru32}.    
\subsection{Artificial Neural Networks}
\label{ann}
Artificial neural networks are part of artificial intelligence. Biological neural networks inspire them. Biological neural networks are part of the animal brain. One of the main functions of the brain is to process information, and the primary information processing element is the neuron. This specialized brain cell combines (usually) several inputs to generate a single output. Depending on the animal, an entire brain can contain anywhere from a handful of neurons to more than a hundred billion, wired together. The output of one cell feeding the input of another, to create a neural network capable of remarkable feats of calculation and decision making (See \cite{Newman:2010}). 
If we could qualify the brain as a computer, then we would say that it is the best of computers. For this reason, the engineer seeks to improve mechanical computers to be closer to the biological computer, i.e., the brain.
The more neural connections there are, the more the network can solve complex problems. Pattern recognition is a task that neural networks can easily accomplish. For this task, introducing as input a pattern to a neural network, yields as output a pattern back (See \cite{Heaton:2015}). \\
In general, neural network problems involve a dataset used to predict values for later datasets. For that, the neural network needs to be trained. Then, neural networks can predict the outcome of entirely new datasets based on training from old data sets. \\  
Most neural network structures use some type of neuron, node, or unit. An algorithm called a neural network will generally be made up of individual interconnected neurons, see figure \ref{fig:neuron}. \\
The artificial neuron receives input from one or more sources, which may be other neurons or data entered into the network from a computer program see figure \ref{fig:neuralnet}. This entry is usually a floating-point or binary. Often the binary input is coded floating point representing true or false like $ 1 $ or $ 0 $. Sometimes the program also describes the binary input as using a bipolar system with true as $ 1 $ and false as $ -1 $.
An artificial neuron multiplies each of these inputs by a weight. It then adds these multiplications and transmits this sum to an activation function given by:
\begin{equation}
f(x_{i},w_{i})=\phi(\sum_{i=1}^{n}(w_{i}\cdot x_{i})),
\label{eq:af}
\end{equation}
with the variables $ x $ and $ w $ represent the input and the weights of the neuron, $n$ is the number of input and weight.\\
Some neural networks do not use an activation function. 
To read more about Neural Networks one can see \cite{Alpaydin:2010}, \cite{Heaton:2015}, \cite{GoodfellowBengioCourville:2016} and \cite{Newman:2010}.\\

\noindent Neural networks are implemented in machine learning tools of many software like Wolfram Mathematica\cite{WR}. We use the machine learning tool ``Predict'' to forecast the evolution of the COVID-19 pandemic in Senegal. We can choose different method of regression algorithm: ``RandomForest'', ``LinearRegression'', ``NeuralNetwork'', ``GaussianProcess'', ``NearestNeighbors'', etc.\\
We use the ``NeuralNetwork'' regression algorithm, which predicts using artificial neural networks. \\     
We consider two cases in the forecasting. The first case, only use the total number of infected case data in the training of the neural networks. While in the other forecasting, we use two types of data: the total number of infected cases and the contact rate. In the second case, the aim is to do forecasting by considering the effect of the contact rate. It is a way to show the effect of the nationwide measures taken at the time $T$, as specified in section \ref{sec:analys}. For the contact rate we use as data $\tilde{\beta}$ given by \eqref{beta1}.  
\begin{figure}[H]
	\centering
	\includegraphics[width=0.4\linewidth]{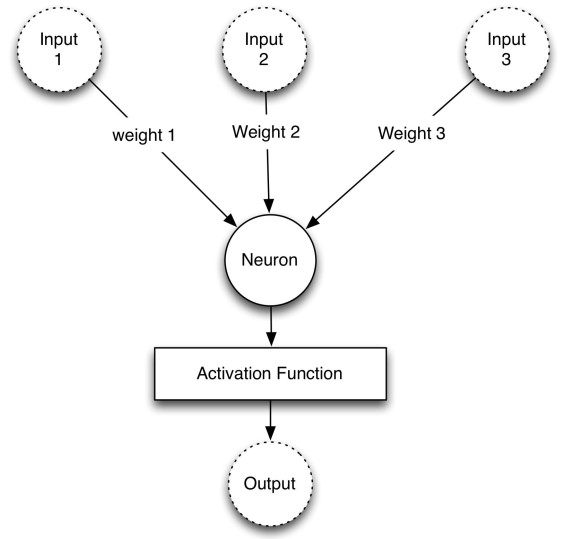}	
	\caption{Example of artificial neuron from \cite{Heaton:2015}.}
	\label{fig:neuron}
\end{figure}
\begin{figure}[H]
	\centering
	\includegraphics[width=0.4\linewidth]{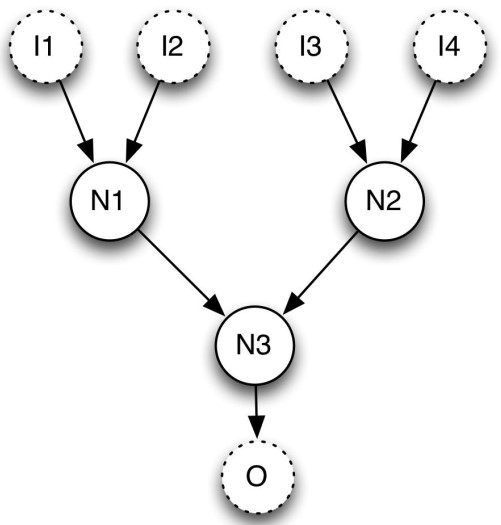}	
	\caption{Example of artificial neural networks from \cite{Heaton:2015}. I1, I2 and I3 mean input 1,2 and 3. N1, N2 and N3 mean neuron 1,2 and 3. O means output.}
	\label{fig:neuralnet}
\end{figure}

\subsection{Forecasting using Prophet}
In this section, we develop another machine learning tool for forecasting to compare with the previous SIRU models and Neural Networks method. 

\noindent We use Prophet \cite{prophet}, a procedure for forecasting time series data based on an additive model where non-linear trends are fit with yearly, weekly, and daily seasonality, plus holiday effects. \\
For the average method, the forecasts of all future values are equal to the average (or “mean”) of the historical
data. If we let the historical data be denoted by $y_1,...,y_T$, then we can write the forecasts as

$$
\hat{y}_{T+h|T}=\bar{y}=(y_1+y_2+...+y_T)/T
$$
\noindent The notation $\hat{y}_{T+h|T}$ is a short-hand for the estimate of $y_{T+h}$  based on the data $y_1,...,y_T$. 
A prediction interval gives an interval within which we expect $y_t$  to lie with a specified probability. For example, assuming that the forecast errors are normally distributed, a 95\% prediction interval for the  $h$-step forecast is 

$$
\hat{y}_{T+h|T}\pm1.96\hat{\sigma_h}
$$

\noindent where  ${\sigma_h}$ is an estimate of the standard deviation of the $h$-step  forecast distribution.  \\
For the data preparation, when we are forecasting at the country level, for small values, forecasts can become negative. To counter this, we round negative values to zero. Also, no tweaking of seasonality-related parameters and additional regressors are performed.      
\section{Numerical Simulations}
\label{sec:simul}

\subsection{Numerical analysis}
\label{sec:simul1}
The data for Senegal, we use, is obtained from daily press releases on the COVID-19 from the Ministry of Health and Social Action (\url{http://www.sante.gouv.sn/}).

\noindent The figures \ref{fig:siru11}, \ref{fig:siru12}, \ref{fig:siru31} and \ref{fig:siru32} show results related to \ref{subsec:measures}. The figures \ref{fig:siru11}, \ref{fig:siru12} correspond to $\tilde{\beta}$ \ref{beta1} and the figures \ref{fig:siru31}, \ref{fig:siru32} correspond to $\tilde{\beta}$ \ref{beta3}. \\
We use $\gamma = 0.8,\  \nu = 1/7,\ \eta = 1/7,$  the total population of Senegal is $N=16743927$ from 
Senegal Population (2020) - Worldometer (www.worldometers.info). Then we obtain: $t_{0}=5.53923,\ I_{0}=32.504, \ I_{U0}=5.1584$, $I_{R0}=27.3456$, $S_{0}=N-I_{0}=1.67439\times10^7$, $\mathcal{R}_{0}=1.30516$, $\beta=9.27954\times 10^{-9}$.  

\noindent The time of the nationwide measures in Senegal taken at 2020, March 23 is considered. Then $T=23$. For $\tilde{\beta}$ given by \eqref{beta1}, results are shown by figures \ref{fig:fitfunccomp12} and \ref{fig:fitfunccomp11}. For $\tilde{\beta}$ given by \eqref{beta3}, results are shown by figures \ref{fig:fitfunccomp2} and \ref{fig:fitfunccomp1}. We see that the maximal number of reported case goes up to $340000$ with the time peak at $t=310$, which correspond to 2021, January 05.  \\
A parametric plot of a parametric resolution of the SIRU model \eqref{sir3}, using $\tilde{\beta}$ given by \eqref{beta1}, with respect to $\delta$ and $p$, is shown in figures \ref{fig:siru0}.

\noindent Now, we consider that stronger measures is taken as explained in the subsection \ref{subsec:measures}. The results is shown in figures \ref{fig:fitfunccomp13zoom}, \ref{fig:fitfunccomp3zoom}, \ref{fig:infPredict11}, \ref{fig:infPredict1}, \ref{fig:infRUPredict11}, \ref{fig:infRUPredict1}, \ref{fig:infTNIPredict11}, \ref{fig:infTNIPredict1} and \ref{fig:siru12}. \\
Particularly figure \ref{fig:fitfunccomp13zoom} and figures \ref{fig:fitfunccomp3zoom} shows results of parametric solve of the SIRU model \eqref{sir3} with the parameters $\phi=0$. Hence, we see that the time of the peak becomes $T_{2}$,  the date where stronger measures are taken. We choose as sample $T_{2}=70$ which correspond to 2020, May 10.

\noindent The figures \ref{fig:infPredict11}, \ref{fig:infPredict1}, \ref{fig:infRUPredict11}, \ref{fig:infRUPredict1}, \ref{fig:infTNIPredict11}, \ref{fig:infTNIPredict1} show parametric plot of the infected $I(t)$, the reported $I_{R}(t)$ and unreported $I_{U}(t)$ and total number of infected $\tilde{TNI}(t)$, with varying values of the parameter $\phi$.\\
The figures \ref{fig:siru12} and \ref{fig:siru32} show again, in different range of the ordinate axis, parametric plot of the infected $I(t)$, the reported $I_{R}(t)$ and unreported $I_{U}(t)$ and total number of infected $\tilde{TNI}(t)$, with varying values of the parameter $\phi$.         

\begin{figure}[H]
	\centering
	\subfloat[Parameric plot of the reported $I_{R}(t)$ and unreported $I_{U}(t)$ infected case.\label{fig:infRUPredict0}]{\includegraphics[width=0.45\linewidth]{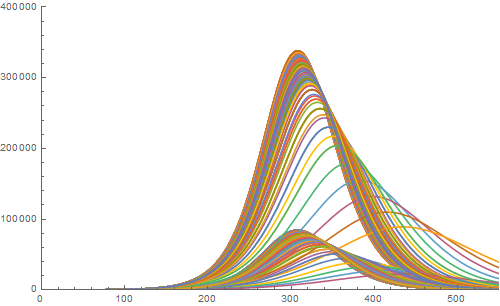}}\qquad
	\subfloat[Zoom of the parameric plot of the reported $I_{R}(t)$ and unreported $I_{U}(t)$ infected case.\label{fig:infRUPredict0zoom1}]{\includegraphics[width=0.45\linewidth]{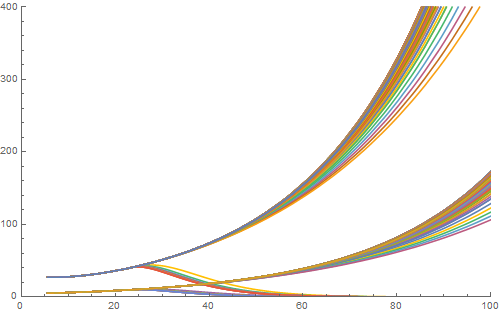}}\qquad
	\subfloat[Parameric plot of the infected $I(t)$.\label{fig:infPredict0}]{\includegraphics[width=0.45\linewidth]{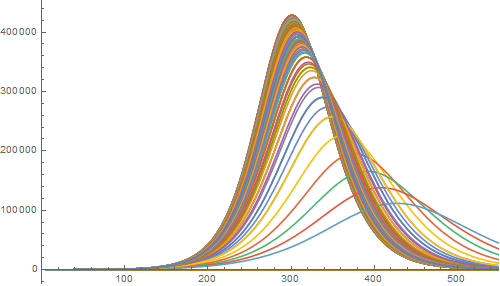}}\qquad
	\subfloat[Zoom of parameric plot of the infected $I(t)$.\label{fig:infPredict0zoom1}]{\includegraphics[width=0.45\linewidth]{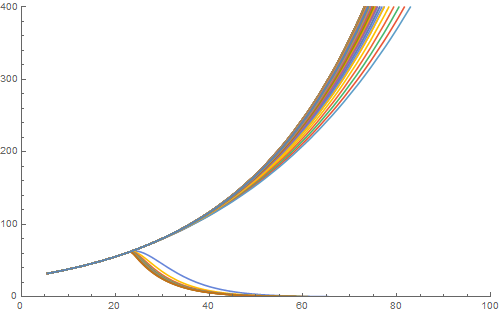}}\qquad
	\subfloat[Parameric plot of the total number of infected case $\tilde{TNI}(t)$.\label{fig:infTNIPredict0}]{\includegraphics[width=0.45\linewidth]{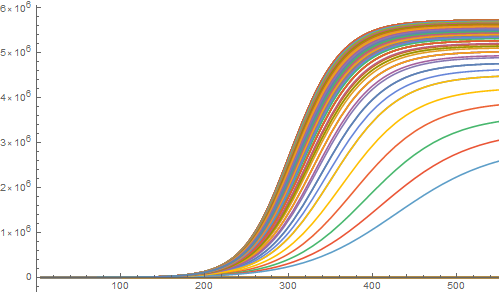}}\qquad
	\subfloat[zoom of the parameric plot of the total number of infected case $\tilde{TNI}(t)$, the reported $I_{R}(t)$ and unreported $I_{U}(t)$ infected case.\label{fig:infTNIRUPredict0}]{\includegraphics[width=0.45\linewidth]{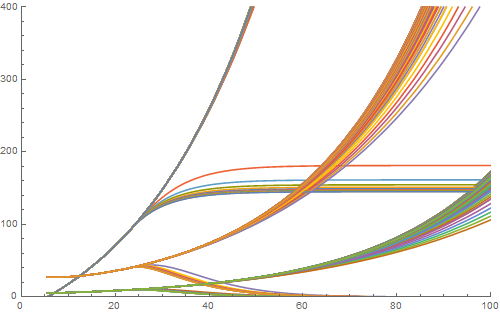}}	
	\caption{View of different sizes parametric plot of the SIRU model\eqref{sir2} using \eqref{beta1},  with measures taken at a time $T$. From top to down corresponding to increasing values of the parameter $\delta/p$, with $\delta\in[0,\ 50]$ by step $5$ and $p\in [1,\ 10001]$ by step of $1000$. On the abscissa axis, the graduation 55 represents 2020, April 25.}
	\label{fig:siru0}
\end{figure}

\begin{figure}[H]
	\centering
	\subfloat[The fit function to total number of infected data $TNI(t)$ in red line, The total number of infected case $\tilde{TNI}(t)$ in yellow, the reported case $I_{R}(t)$ in blue line, the unreported case $I_{U}(t)$ in black line, the first fit function to total number of infected data in green line.\label{fig:fitfunccomp12}]{\includegraphics[width=0.45\linewidth]{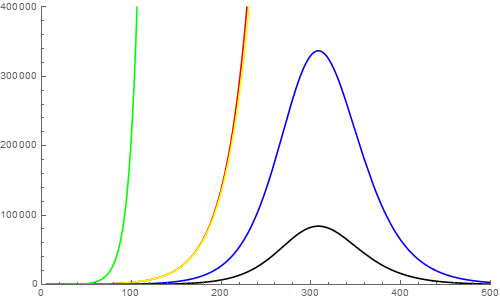}}\qquad
	\subfloat[Plot, with maximal ordinate value fixed to $1000$. The fit function to total number of infected data $TNI(t)$ and the total number of infected case $\tilde{TNI}(t)$ in yellow have the same path.\label{fig:fitfunccomp11}]{\includegraphics[width=0.45\linewidth]{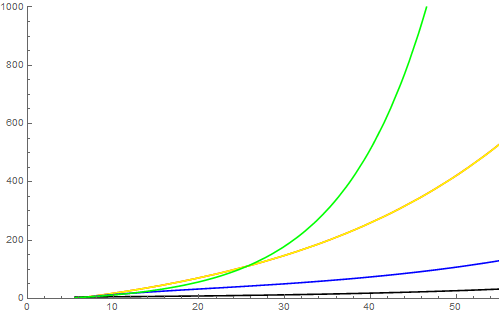}}\qquad
	\subfloat[Plot considering stronger measures taken at a time $T_{2}$ and  $\phi=1$.  \label{fig:fitfunccomp13zoom}]{\includegraphics[width=0.45\linewidth]{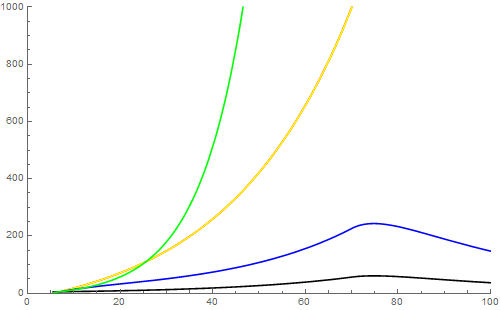}}\qquad
	\subfloat[Parameric plot of the infected case $I(t)$. \label{fig:infPredict11}]{\includegraphics[width=0.45\linewidth]{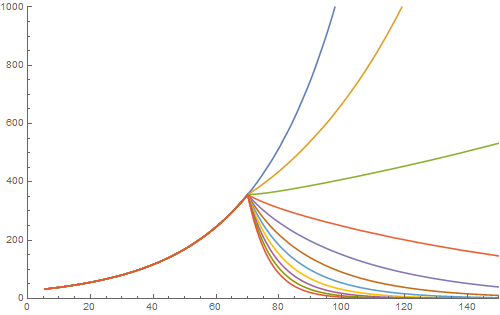}}\qquad
	\subfloat[Parameric plot of the reported $I_{R}(t)$ and unreported $I_{U}(t)$ infected case.\label{fig:infRUPredict11}]{\includegraphics[width=0.45\linewidth]{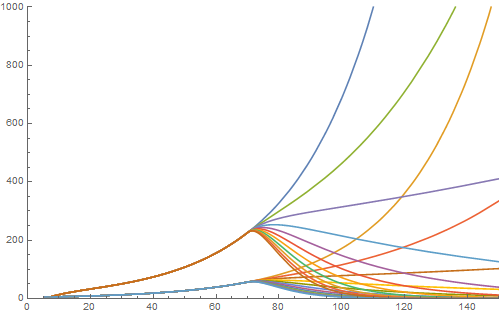}}\qquad
	\subfloat[Parameric plot of the total number infected case $\tilde{TNI}(t)$. \label{fig:infTNIPredict11}]{\includegraphics[width=0.45\linewidth]{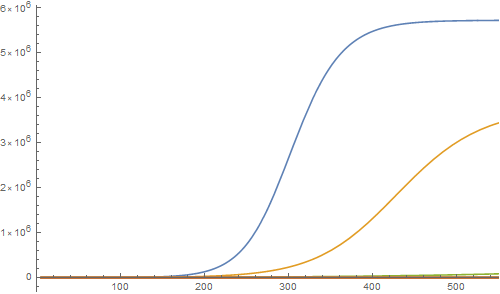}}	
	\caption{Plot and parametric plot of the SIRU model\eqref{sir2}, using \eqref{beta1} with $\delta=1$ and $p=5000$ and \eqref{beta4} with $\phi$ as parameter, fitted to data of Senegal. With stronger measures taken at a time $T_{2}$.  From top to down corresponding to increasing values of the parameter $\phi$ from $0$ to $1$ by step of $0.1$. On the abscissa axis, the graduation 55 represents 2020, April 25.}
	\label{fig:siru11}
\end{figure}

\begin{figure}[H]
	\centering
	\subfloat[Parameric plot, with as maximum value on the ordinate axis fixed at $400000$, of the reported $I_{R}(t)$ and unreported $I_{U}(t)$ infected case.\label{fig:infRUPredict11zoom1}]{\includegraphics[width=0.45\linewidth]{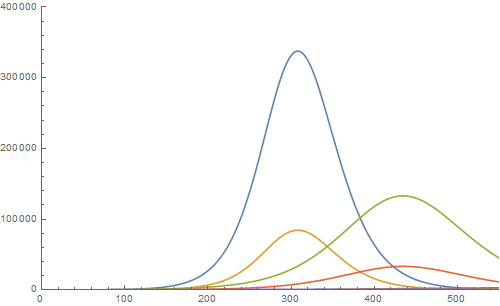}}\qquad
	\subfloat[Parameric plot, with as maximum value on the ordinate axis fixed at $100000$, of the reported $I_{R}(t)$ and unreported $I_{U}(t)$ infected case.\label{fig:infRUPredict11zoom2}]{\includegraphics[width=0.45\linewidth]{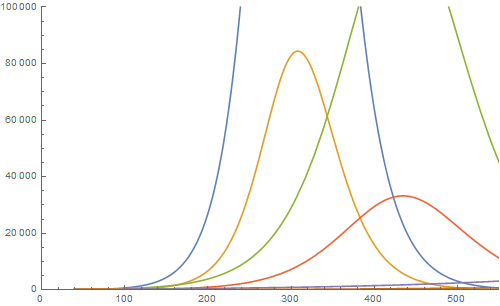}}\qquad
	\subfloat[Parameric plot, with as maximum value on the ordinate axis fixed at $40000$, of the reported $I_{R}(t)$ and unreported $I_{U}(t)$ infected case.\label{fig:infRUPredict11zoom3}]{\includegraphics[width=0.45\linewidth]{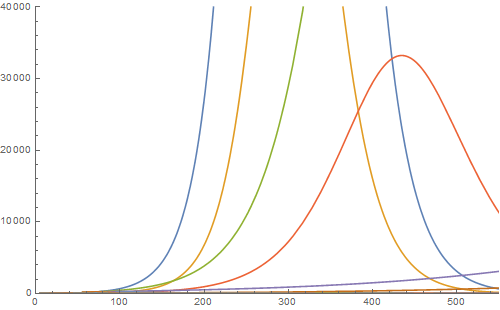}}\qquad
	\subfloat[Parameric plot, with as maximum value on the ordinate axis fixed at $40000$, of the reported $I_{R}(t)$ and unreported $I_{U}(t)$ infected case.\label{fig:infRUPredict11zoom4}]{\includegraphics[width=0.45\linewidth]{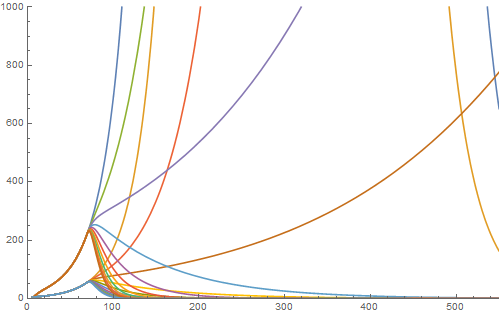}}\qquad
	\subfloat[Parameric plot, with maximal size $10000$, of  $\tilde{TNI}(t)$.\label{fig:infTNIPredict11zoom}]{\includegraphics[width=0.45\linewidth]{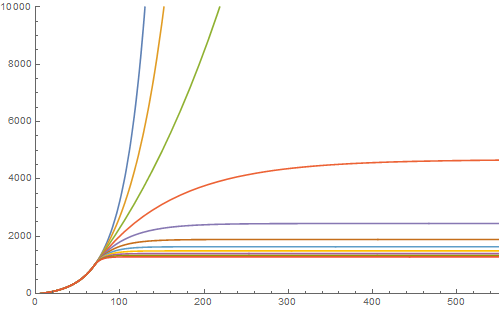}}\qquad
	\subfloat[Parameric plot, with maximal size $5000$, of  $\tilde{TNI}(t)$,  $I_{R}(t)$ and  $I_{U}(t)$.\label{fig:infTNIRUPredict11}]{\includegraphics[width=0.45\linewidth]{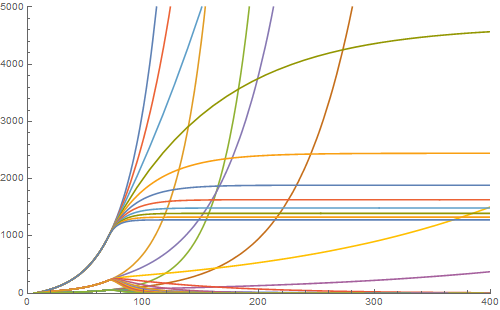}}	
	\caption{View of different sizes parametric plot of the SIRU model\eqref{sir2} using \eqref{beta3},  with stronger measures taken at a time $T_{2}$. From top to down corresponding to increasing values of the parameter $\phi$ from $0$ to $1$ by step of $0.1$. On the abscissa axis, the graduation 55 represents 2020, April 25.}
	\label{fig:siru12}
\end{figure}
\begin{figure}[H]
	\centering
	\subfloat[The fit function to total number of infected data $TNI(t)$ in red line, The total number of infected case $\tilde{TNI}(t)$ in yellow, the reported case $I_{R}(t)$ in blue line, the unreported case $I_{U}(t)$ in black line, the first fit function to total number of infected data in green line.\label{fig:fitfunccomp2}]{\includegraphics[width=0.45\linewidth]{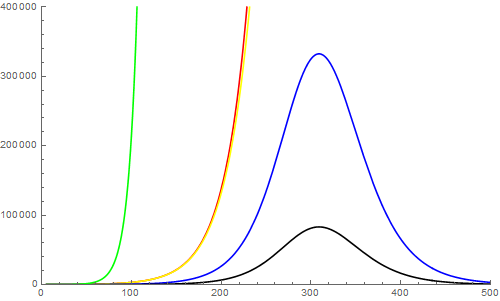}}\qquad
	\subfloat[Plot, with maximal value fixed to $1000$. The fit function to total number of infected data $TNI(t)$ in red line and the total number of infected case $\tilde{TNI}(t)$ in yellow have the same path. \label{fig:fitfunccomp1}]{\includegraphics[width=0.45\linewidth]{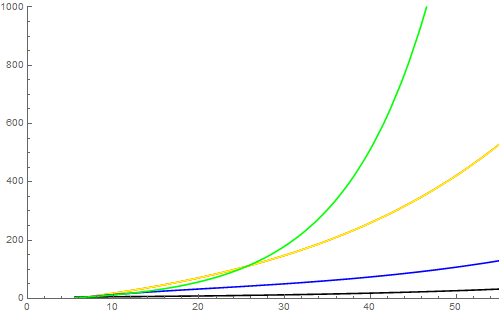}}\qquad
	\subfloat[Plot considering stronger measures taken at a time $T_{2}$ and $\phi=1$.\label{fig:fitfunccomp3zoom}]{\includegraphics[width=0.45\linewidth]{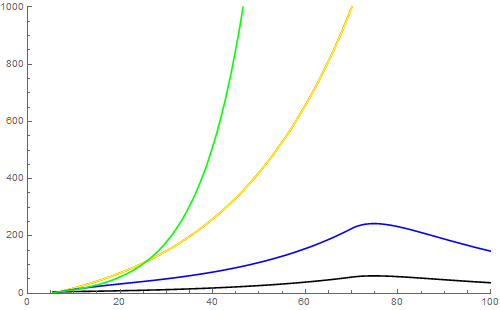}}\qquad
	\subfloat[Parameric plot of the infected case $I(t)$. \label{fig:infPredict1}]{\includegraphics[width=0.45\linewidth]{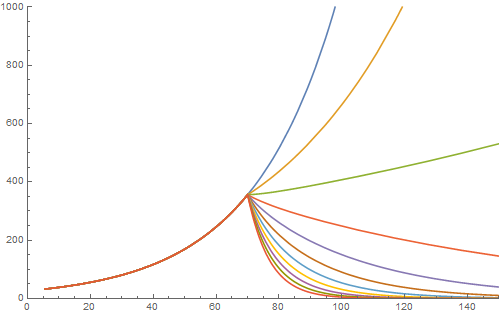}}\qquad
	\subfloat[Parameric plot of the reported $I_{R}(t)$ and unreported $I_{U}(t)$ infected case.\label{fig:infRUPredict1}]{\includegraphics[width=0.45\linewidth]{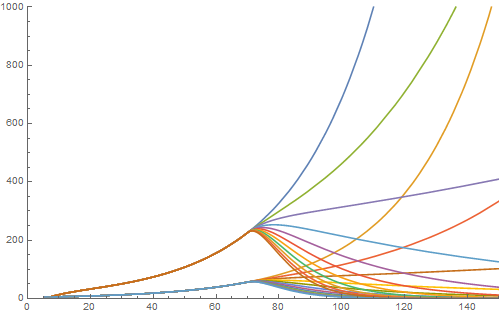}}\qquad
	\subfloat[Parameric plot of the total number infected case $\tilde{TNI}(t)$.\label{fig:infTNIPredict1}]{\includegraphics[width=0.45\linewidth]{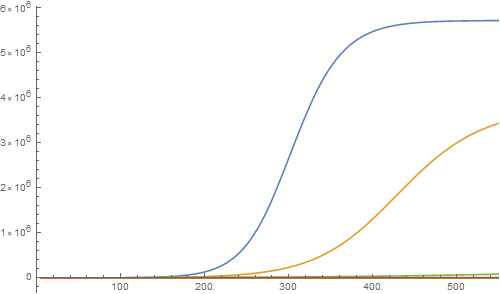}}	
	\caption{Plot and parametric plot of the SIRU model\eqref{sir2},  using \eqref{beta3} with $\varphi=10^{-5}$ and \eqref{beta4} with $\phi$ as parameter, fitted to data of Senegal. With stronger measures taken at a time $T_{2}$.  From top to down coresponding to increasing values of the parameter $\phi$ from $0$ to $1$ by step of $0.1$. On the abscissa axis, the graduation 55 represents 2020, April 25.}
	\label{fig:siru31}
\end{figure}
\begin{figure}[H]
	\centering
	\subfloat[Parameric plot, with as maximum value on the ordinate axis fixed at $400000$, of the reported $I_{R}(t)$ and unreported $I_{U}(t)$ infected case.\label{fig:infRUPredict1zoom1}]{\includegraphics[width=0.45\linewidth]{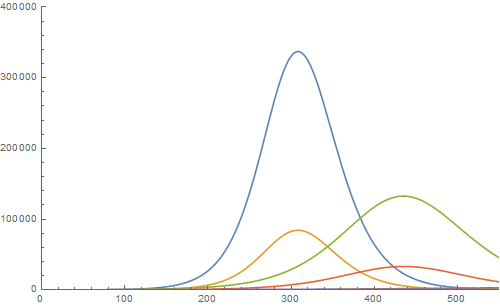}}\qquad
	\subfloat[Parameric plot, with as maximum value on the ordinate axis fixed at $100000$, of the reported $I_{R}(t)$ and unreported $I_{U}(t)$ infected case.\label{fig:infRUPredict1zoom2}]{\includegraphics[width=0.45\linewidth]{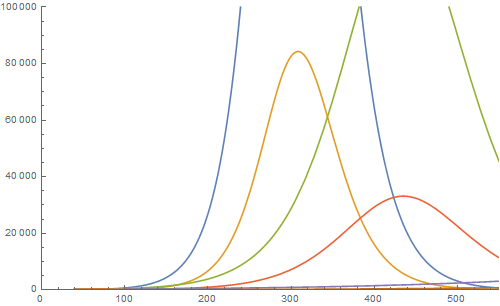}}\qquad
	\subfloat[Parameric plot, with as maximum value on the ordinate axis fixed at $40000$, of the reported $I_{R}(t)$ and unreported $I_{U}(t)$ infected case.\label{fig:infRUPredict1zoom3}]{\includegraphics[width=0.45\linewidth]{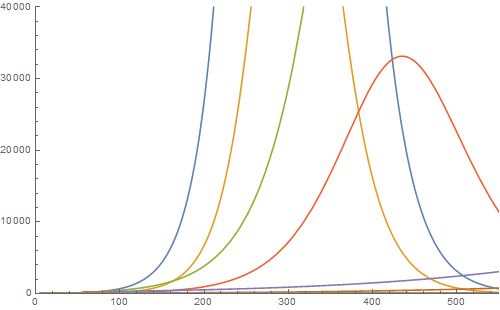}}\qquad
	\subfloat[Parameric plot, with as maximum value on the ordinate axis fixed at $40000$, of the reported $I_{R}(t)$ and unreported $I_{U}(t)$ infected case.\label{fig:infRUPredict1zoom4}]{\includegraphics[width=0.45\linewidth]{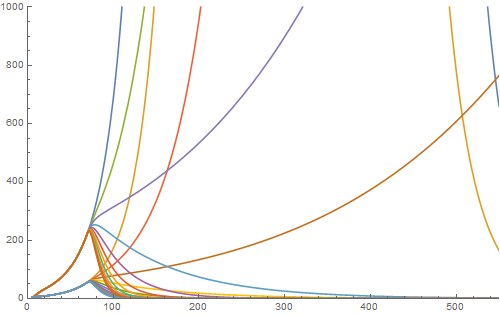}}\qquad
	\subfloat[Parameric plot, with maximal size $10000$, of  $\tilde{TNI}(t)$.\label{fig:infTNIPredict1zoom}]{\includegraphics[width=0.45\linewidth]{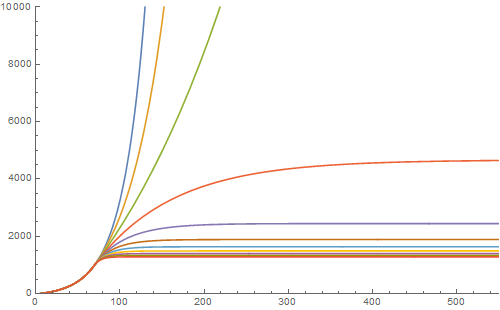}}\qquad
	\subfloat[Parameric plot, with maximal size $5000$, of $\tilde{TNI}(t)$,  $I_{R}(t)$ and  $I_{U}(t)$.\label{fig:infTNIRUPredict1}]{\includegraphics[width=0.45\linewidth]{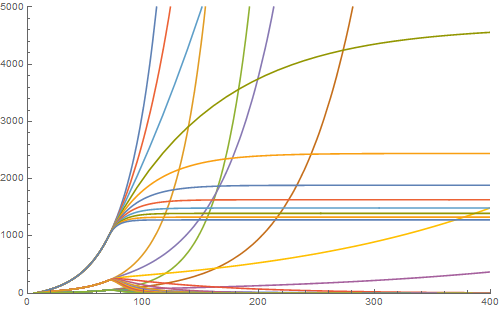}}	
	\caption{View of different sizes parametric plot of the SIRU model\eqref{sir2}  using \eqref{beta4}, with stronger measures taken at a time $T_{2}$. From top to down corresponding to increasing values of the parameter $\phi$ from $0$ to $1$ by step of $0.1$. On the abscissa axis, the graduation 55 represents 2020, April 25.}
	\label{fig:siru32}
\end{figure}
\subsection{Comparative forecasting with machine learning}
\label{sec:simul2}
\noindent The forecasting is done with two data set. For both data from March 02, to April 25, 2020 and March 02, to May 12, 2020 we carry out simulations for a longer time and forecast the potential trends of the COVID-19 pandemic in Senegal. The predicted cumulative number of confirmed cases are first plotted for periods until May 26, June 10 and Sept. 18, 2020 ahead forecast with Prophet, with 95\% prediction intervals. 

\noindent The confirmed predictions for Senegal, using Prophet are given in Figure \ref{fig:mlearn2} (see Tables \ref{data2_May26}, \ref{data2_June10} and \ref{data2_Sept18} for the value of the confidence interval).\\

\noindent The figures \ref{fig:mlearn1} shows forecasting using Neural Networks method of Predict. Particularly the figure \ref{fig:mlearn1} shows two forecasting, one based only on data and an other obtained by training the Neural Networks method with data and contact rate. The prediction are done until 2020, May 26, June 10  and September 18.\\

\begin{table}[H]
	\footnotesize  
	\caption{\small{Predicted cumulative confirmed cases using SIRU model. Forecasting until the dates May 26, June 10 and September 18,  2020.}}\label{tabfitforecast}
	\begin{center}
		\begin{tabular}{|c|c|c|c|c|c|} 
			\hline
			Until date $J$ &  $J-4$        &  $ J-3$       & $J-2$         & $ J-1$         &$J$           \\ \hline
			$J=$2020-05-26 & 1552.118087 & 1614.691394 & 1679.634752 & 1747.037931  &1816.9941   \\ \hline
			$J=$2020-06-10 & 2785.471575 & 2894.759872 & 3008.187615 & 3125.911594  &3248.094533 \\ \hline
			$J=$2020-09-18 & 118687.0026 & 123186.2236 & 127855.8589 & 132702.3632  &137732.4357 \\ \hline
		\end{tabular}
	\end{center}
\end{table} 
\begin{figure}[H]
	\centering
	\subfloat[Plot until May 26,2020. \label{fig:fitfuncpred85} ]{\includegraphics[width=0.45\linewidth]{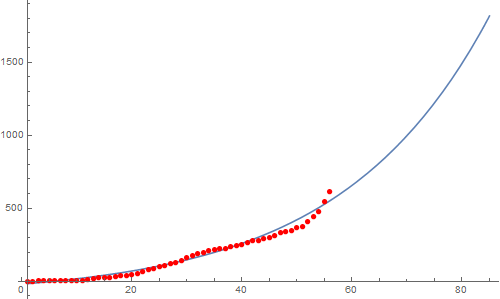}}\qquad
	\subfloat[Plot Plot until June 10,2020.\label{fig:fitfuncpred100}]{\includegraphics[width=0.45\linewidth]{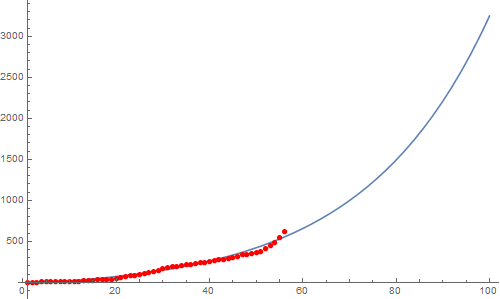}} \qquad
	\subfloat[Plot until September 18, 2020.\label{fig:fitfuncpred200}]{\includegraphics[width=0.45\linewidth]{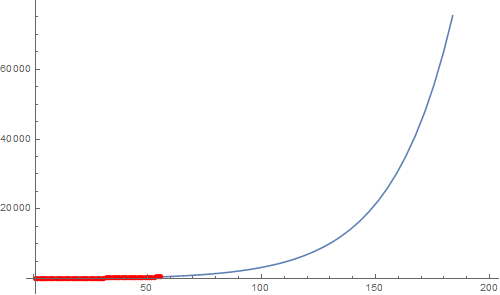}}
	\caption{Plot the function $TNI(t)$ in blue line with data set 1 in red dotted, until the dates 2020, May 26, June 10 and September 18 corresponding to the graduations $85$, $100$ and $200$ on the abscissa axis.}
	\label{fig:fitforecast}
\end{figure}		
\begin{table}[H]
	\footnotesize 
	\caption{\small{Data set 2. Predicted cumulative confirmed cases $\sim$May 26, 2020, from top to down using Prophet and Neural Networks method of Predict. On the down, from left to right with data set \& contact rate and data set only.}}\label{data2_May26}
	\begin{center}
		\begin{tabular}{|c|c|c|c|} 
			\hline
			Date  &         $ \hat{y}$ &    $\hat{y}_{lower}$ &    $\hat{y}_{upper}$   \\
			\hline
			2020-05-22& 	2695.292384 &	2598.664351& 	2807.280554\\ \hline
			2020-05-23& 	2777.228597 &	2663.535151 &	2897.959707\\ \hline
			2020-05-24& 	2853.863973 &	2718.853527 &	2994.394743\\ \hline
			2020-05-25& 	2945.018292 &	2802.374207 &	3095.868137\\ \hline
			2020-05-26& 	3024.913096 &	2857.548751 &	3196.443359\\ \hline
		\end{tabular}\\
		\vspace{0.02cm}	
		\begin{tabular}{|c|c|c|c|c|c|c|} 
			\hline
			Date  &         $ P^{\beta}$ &    $P^{\beta}_{lower}$ &    $P^{\beta}_{upper}$  &         $ P$ &    $P_{lower}$ &    $P_{upper}$  \\\hline
			2020-05-22& 2509.179622 & 2438.869281 & 2579.489963 & 2714.814079 & 2647.304599 & 2782.323560\\ \hline
			2020-05-23& 2575.632714 & 2505.322373 & 2645.943055 & 2803.067798 & 2735.558317 & 2870.577278\\ \hline
			2020-05-24& 2641.331365 & 2571.021025 & 2711.641706 & 2891.331968 & 2823.822488 & 2958.841449\\ \hline
			2020-05-25& 2706.266493 & 2635.956152 & 2776.576834 & 2979.590415 & 2912.080934 & 3047.099895\\ \hline
			2020-05-26& 2770.436355 & 2700.126014 & 2840.746696 & 3067.828454 & 3000.318974 & 3135.337935\\ \hline
		\end{tabular}
	\end{center}
\end{table} 
\begin{table}[H]
	\footnotesize
	\caption{\small{Data set 2. Predicted cumulative confirmed cases $\sim$June 10, 2020, from top to down using Prophet and Neural Networks method of Predict. On the down, from left to right with data set \& contact rate and data set only.}}\label{data2_June10} 
	\begin{center}
		\begin{tabular}{|c|c|c|c|} 
			\hline
			Date  &         $ \hat{y}$ &    $\hat{y}_{lower}$ &    $\hat{y}_{upper}$   \\
			\hline
			2020-06-06 &	3887.495614& 	3491.259628& 	4284.386057\\ \hline
			2020-06-07& 	3964.130990 &	3541.984441 &	4387.085352\\ \hline
			2020-06-08& 	4055.285309 &	3616.658728 &	4496.788853\\ \hline
			2020-06-09& 	4135.180113 &	3685.960670 &	4604.522925\\ \hline
			2020-06-10 &	4206.725884 &	3726.064523 &	4685.455882\\ \hline
		\end{tabular}\\
		\vspace{0.02cm}
		\begin{tabular}{|c|c|c|c|c|c|c|} 
			\hline
			Date  &         $ P^{\beta}$ &    $P^{\beta}_{lower}$ &    $P^{\beta}_{upper}$  &         $ P$ &    $P_{lower}$ &    $P_{upper}$  \\
			\hline
			2020-06-06 &3409.698565 & 3339.388223 & 3480.008905& 4034.675309 & 3967.165829 &  4102.184790\\ \hline
			2020-06-07 &3462.752283 & 3392.441942 & 3533.062624& 4122.100051 & 4054.590571 &  4189.609532\\ \hline
			2020-06-08 &3515.229377 & 3444.919036 & 3585.539718& 4209.437690 & 4141.928209 &  4276.947171\\ \hline
			2020-06-09 &3567.143036 & 3496.832695 & 3637.453377& 4296.685488 & 4229.176007 &  4364.194968\\ \hline
			2020-06-10 &3618.508442 & 3548.198100 & 3688.818782& 4383.845435 & 4316.335955 &  4451.354916\\ \hline
		\end{tabular}
	\end{center}
\end{table} 

\begin{table}[H]
	\footnotesize
	\caption{\small{Data set 2. Predicted cumulative confirmed cases $\sim$Sept 18, 2020, from top to down using Prophet and Neural Networks method of Predict. On the down, from left to right with data set \& contact rate and data set only.}}\label{data2_Sept18} 
	\begin{center}
		\begin{tabular}{|c|c|c|c|} 
			\hline
			Date  &         $ \hat{y}$ &    $\hat{y}_{lower}$ &    $\hat{y}_{upper}$   \\
			\hline
			2020-09-14& 	11827.154428 &	7680.757057 &	15751.856303\\ \hline
			2020-09-15& 	11907.049231 	&7721.053577 &	15888.511751\\ \hline
			2020-09-16& 	11978.595002 	&7769.517365 &	15984.853526\\ \hline
			2020-09-17& 	12051.967485 	&7776.250955 &	16102.077092\\ \hline
			2020-09-18& 	12132.562027 	&7824.092604 &	16225.629593\\ \hline
		\end{tabular}\\
		\vspace{0.02cm}
		\begin{tabular}{|c|c|c|c|c|c|c|} 
			\hline
			Date  &         $ P^{\beta}$ &    $P^{\beta}_{lower}$ &    $P^{\beta}_{upper}$  &         $ P$ &    $P_{lower}$ &    $P_{upper}$  \\
			\hline
			2020-09-14&7404.458859 & 7334.148519 & 7474.769201  & 12499.239975 & 12431.730495 & 12566.749456\\ \hline
			2020-09-15&7439.797975 & 7369.487634 & 7510.108316  & 12582.463095 & 12514.953615 & 12649.972576\\ \hline
			2020-09-16&7475.119671 & 7404.809330 & 7545.430012  & 12665.674270 & 12598.164790 & 12733.183751\\ \hline
			2020-09-17&7510.423945 & 7440.113604 & 7580.734286  & 12748.868522 & 12681.359041 & 12816.378002\\ \hline
			2020-09-18&7545.708310 & 7475.397969 & 7616.018651  & 12832.053814 & 12764.544334 & 12899.563295\\ \hline
		\end{tabular}
	\end{center}
\end{table} 
\begin{figure}[H]
	\centering
	\subfloat[The forecasting is until 2020, May 26 which correspond to the graduation $85$ on the abscissa axis.\label{fig:machinelearncompa0ic}]{\includegraphics[width=0.45\linewidth]{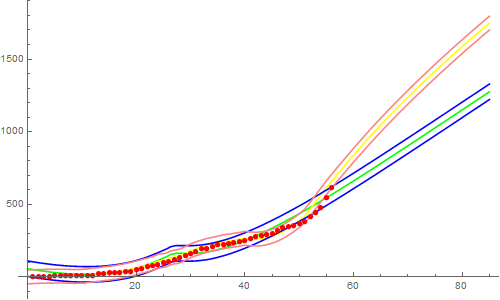}}\qquad	
	\subfloat[The forecasting is until 2020, May 26 which correspond to the graduation $85$ on the abscissa axis.\label{fig:machinelearncompa0ic2}]{\includegraphics[width=0.45\linewidth]{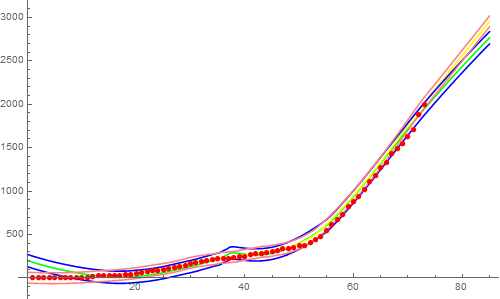}}\qquad	
	\subfloat[The forecasting is until 2020, June 10 which correspond to the graduation $100$ on the abscissa axis.\label{fig:machinelearncompa1-bis2ic}]{\includegraphics[width=0.45\linewidth]{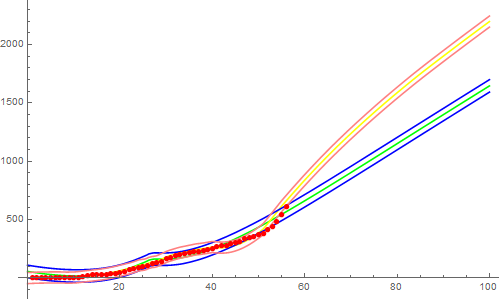}}\qquad
	\subfloat[The forecasting is until 2020, June 10 which correspond to the graduation $100$ on the abscissa axis.\label{fig:machinelearncompa2-bis2ic2}]{\includegraphics[width=0.45\linewidth]{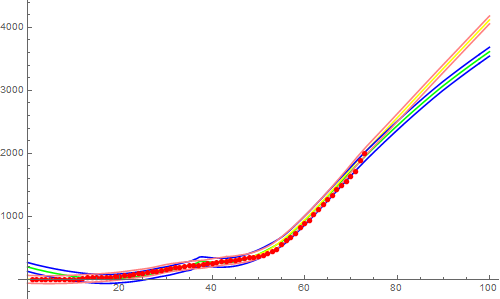}}\qquad
	\subfloat[The forecasting is until 2020, September 18 which correspond to the graduation $200$ on the abscissa axis.\label{fig:machinelearncompa2-bis2ic}]{\includegraphics[width=0.45\linewidth]{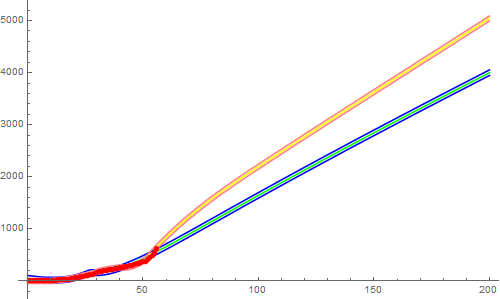}}\qquad
	\subfloat[The forecasting is until 2020, September 18 which correspond to the graduation $200$ on the abscissa axis.\label{fig:machinelearncompa3-bis2ic2}]{\includegraphics[width=0.45\linewidth]{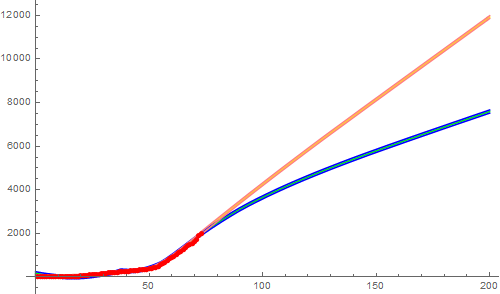}}\qquad
	\caption{Comparative forecasting, using Neural Networks, of the total number of infected cases. On the abscissa axis the graduations $85$, $100$ and $200$  correspond to the dates 2020, May 26, June 10 and September 18. Forecasting using cumulative data only in yellow curve with confidence interval in orange, using both cumulative and contact rate data in green curve with confidence interval in blue, and data in red dotted. In the left plot using data set 1 and in the right plot using data set 2.}
	\label{fig:mlearn1}
\end{figure}
\begin{figure}[H]
	\centering
	\subfloat[The forecasting is until the date 2020, May 26.\label{fig:data1_May26}]{\includegraphics[width=0.45\linewidth]{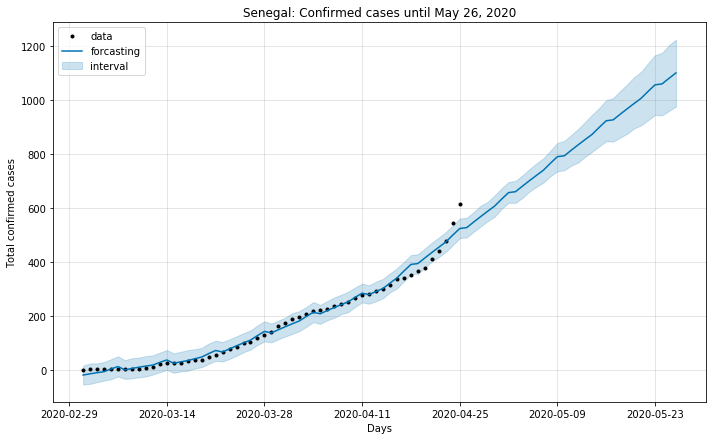}}\qquad
	\subfloat[The forecasting is until the date 2020, May 26.\label{fig:data2_May26}]{\includegraphics[width=0.45\linewidth]{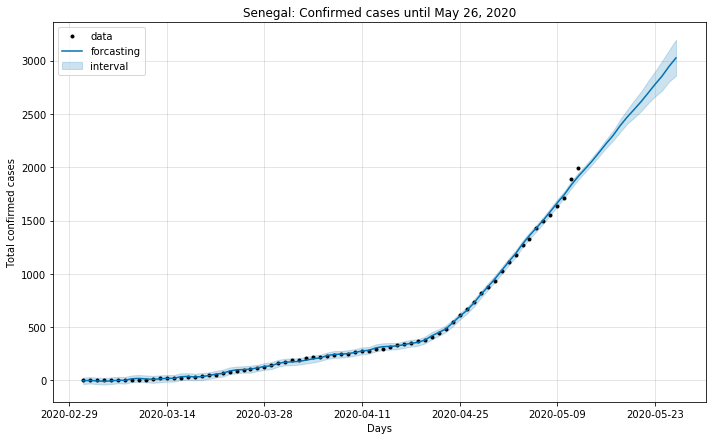}}\qquad	
	\subfloat[The forecasting is until the date 2020, June 10.\label{fig:data1_June10}]{\includegraphics[width=0.45\linewidth]{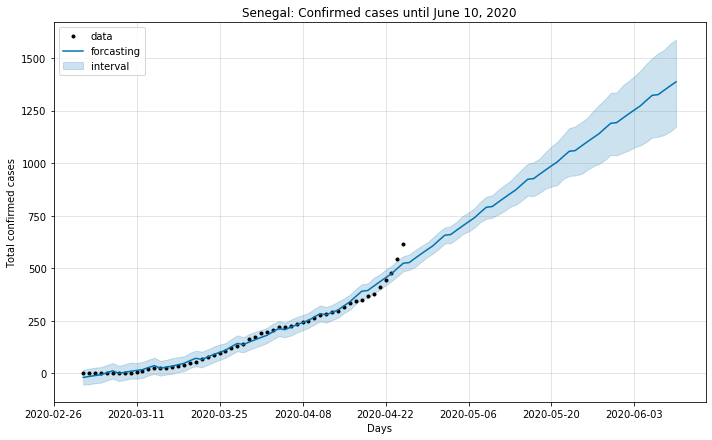}}\qquad
	\subfloat[The forecasting is until the date 2020, June 10.\label{fig:data2_June10}]{\includegraphics[width=0.45\linewidth]{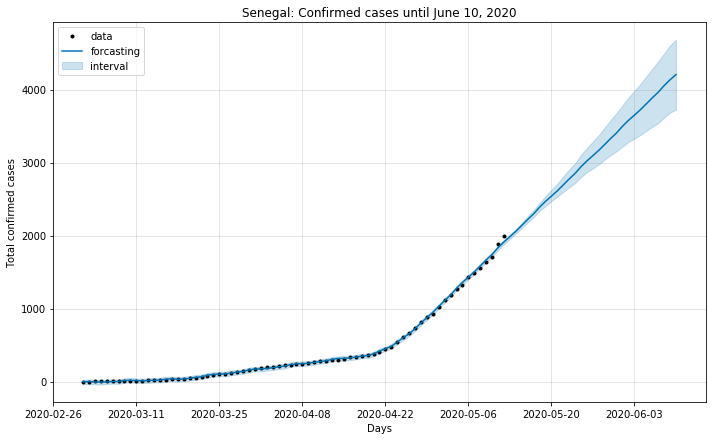}}\qquad
	\subfloat[The forecasting is until the date 2020, September 18.\label{fig:data1_Sept18}]{\includegraphics[width=0.45\linewidth]{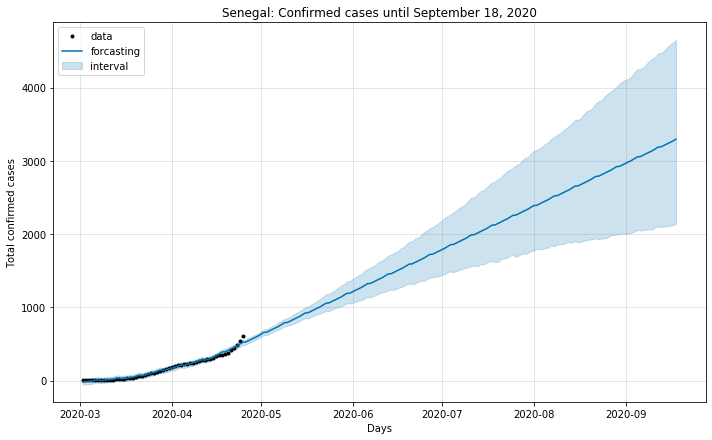}}\qquad
	\subfloat[The forecasting is until the date 2020, September 18.\label{fig:data2_Sept18}]{\includegraphics[width=0.45\linewidth]{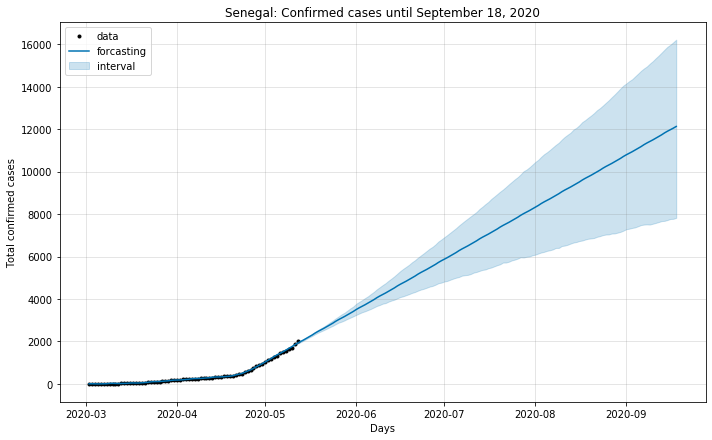}}\qquad
	\caption{Forecasting, using Prophet, of the total number of infected cases until the dates 2020, May 26, June 10 and September 18, from top to down. In left plot using data set 1 and in the right plot using data set 2.}
	\label{fig:mlearn2}
\end{figure}

\section{Discussion}
\label{sec:discuss}

Analysis of the new trend in the data from March 2 to April 24, 2020 shows a change in the trajectory of the total number of cases. That causes a reduction of the maximum value of the peak compared to what it would have been without the measures taken on March 23, 2020. See figures \ref{fig:fitfunccomp12}, \ref{fig:fitfunccomp11}, \ref{fig:fitfunccomp1}, \ref{fig:fitfunccomp2}. \\
By considering new nationwide measures of Senegal, which we have chosen in this study to fix on the date of May 10, 2020, we note that the maximum value of the peak decreases according to the force of the measures taken. Likewise, the time of the peak is postponed as shown by the parametric plots in figures \ref{fig:siru11}, \ref{fig:siru12}, \ref{fig:siru31} and \ref{fig:siru32}. \\
Since the first measures on March 23, 2020, Senegal has laughed at additional measures such as the closing of markets, shops, stores and other public places, with an opening calendar. We have therefore chosen May 10, 2020 as a date from which the additional measures can begin to take effect in reducing contamination. \\
We see that depending on the strength of these measures, the evolution of the disease can lose its exponential nature or become slower. \\
The prediction with neural networks and Prophet show an optimistic situation. The forecasting based only on data and those on contact rates show a slow evolution as shown in figures \ref{fig:mlearn1} and \ref{fig:mlearn2}.
We see that the curve obtained using the contact rate function in training of the neural networks, goes below that obtained only using the data on the total number of cases.

\section{Materials and methods}
\label{sec:app}
\subsection{Basic proprieties}
Let's set $E=\{(S,I,I_{R},I_{U})\in \mathbb{R}^{4}_{+}:\ S+I+I_{R}+I_{U}\leq N \}$, $N$ is the initial population. 
Replacing an initial solution $(S,I,I_{R},I_{U})\in E$ in the system \eqref{sir1}, we obtain:
\begin{equation}
\left\{ \begin{array}{l}
\displaystyle \dot{S}\leq 0\\
\displaystyle \dot{I}\geq - \nu I(t)\\
\displaystyle \dot{I_{R}}\geq -\eta I_{R}(t)\\
\displaystyle \dot{I_{U}}\geq -\eta I_{U}(t)
\end{array}
\right.
\end{equation}
Solving these differential inequalities gives $I(t)\geq K \exp(-\nu t)$, $I_{R}(t)\geq K \exp(-\eta t)$ and $I_{U}\geq K \exp(-\eta t)$. Hence $I,\ I_{R}, \ I_{U}\geq 0 $. \\
By considering an initial solution with $S=0$, we get $\dot{S}=0$. Then $S(t)\geq 0$.\\
Now, summing the equations of the system \eqref{sir1}, we obtain $\dot{S}+\dot{I}+\dot{I_{R}}+\dot{I_{U}}\leq -\eta (I_{R}+I_{U})\leq 0$.   \\
Then $S+I+I_{R}+I_{U}\leq S_{0}+I_{0}+I_{R0}+I_{U0}\leq N$. Finally $(S,I,I_{R},I_{U})$ remain in the set $E$. 
This implies that $E$ is a positively-invariant set under the flow described by \eqref{sir1}. In addition, the model can be considered as being epidemiologically and mathematically well-posed.

\noindent Let's set 
\begin{equation}
X=\left(\begin{array}{l} S\\
I\\
I_{R}\\
I_{U}
\end{array}\right), \  f(X(t))=\left(\begin{array}{l} 
-\beta S(t) (I(t)+I_{U}(t))\\
\beta S(t)( I(t)+I_{U}(t)) - \nu I(t)\\
\gamma\nu I(t)-\eta I_{R}(t)\\
(1-\gamma)\nu I(t)-\eta I_{U}(t)
\end{array}\right)\ \textrm{and}\ X_{0}=\left(\begin{array}{l} S_{0}\\
I_{0}\\
I_{R0}\\
I_{U0}
\end{array}\right)\in E
\end{equation}
Then the system \eqref{sir1} with initial condition can be rewritten in the following form:
\begin{equation}
\left\{\begin{array}{l}
\dot{X}=f(X(t))\\
X(t_{0})=X_{0}
\end{array}\right.
\label{gfsir}
\end{equation}
\begin{theo}
	The system \eqref{gfsir} has a unique solution $X(t)$ for $t\geq t_{0}$. 
	\label{th1}
\end{theo}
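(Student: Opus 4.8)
The plan is to treat \eqref{gfsir} as an autonomous ODE with a smooth right-hand side and to combine the classical Cauchy--Lipschitz (Picard--Lindel\"of) local theory with the a priori bounds already established in this section. The statement has two parts: local existence and uniqueness, and global continuation to all $t\ge t_{0}$.

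First I would record that each component of $f$ is a polynomial of degree at most two in the coordinates $(S,I,I_{R},I_{U})$; hence $f$ is of class $C^{1}$ (in fact $C^{\infty}$) on $\mathbb{R}^{4}$, and is therefore locally Lipschitz continuous on every bounded neighbourhood of $E$. Since $X_{0}\in E\subset\mathbb{R}^{4}$, the Cauchy--Lipschitz theorem provides a unique maximal solution $X(t)$ of $\dot X=f(X(t))$, $X(t_{0})=X_{0}$, defined on a maximal forward interval $[t_{0},t_{\max})$ with $t_{\max}>t_{0}$; uniqueness on this interval is part of the same statement.

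Next I would exclude finite-time blow-up by invoking the invariance computation performed earlier: starting from $X_{0}\in E$ one has $\dot S\le 0$, $\dot I\ge -\nu I$, $\dot I_{R}\ge -\eta I_{R}$, $\dot I_{U}\ge -\eta I_{U}$, and $\dot S+\dot I+\dot I_{R}+\dot I_{U}\le -\eta(I_{R}+I_{U})\le 0$, so the orbit remains in the compact set $E$, and in particular $\|X(t)\|\le N$ for all $t\in[t_{0},t_{\max})$.

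Finally I would apply the standard maximal-interval alternative: were $t_{\max}<+\infty$, then $\|X(t)\|$ would necessarily be unbounded as $t\uparrow t_{\max}$, contradicting $\|X(t)\|\le N$. Hence $t_{\max}=+\infty$, and the unique solution exists for all $t\ge t_{0}$, which proves the theorem. The only step that is not immediate from textbook ODE theory is this global continuation, and it is entirely supplied by the positive invariance of $E$ shown above; everything else is a routine verification that the quadratic vector field $f$ satisfies the hypotheses of the Cauchy--Lipschitz theorem.
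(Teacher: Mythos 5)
Your proof is correct and follows essentially the same route as the paper, which simply invokes the Cauchy--Lipschitz / fundamental existence--uniqueness theorem on the grounds that $f\in C^{1}(E)$. You are in fact more explicit than the paper: the local-to-global continuation via the positive invariance of the compact set $E$ is left implicit there, and your spelling it out closes that small gap cleanly.
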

\begin{rem}
	The proof of the above theorem come from the Cauchy-Lipschitz theorem and the Fundamental Existence-Uniqueness theorem in \cite{Perko:1991}, since $f\in C^{1}(E)$.	
\end{rem}
\noindent Let's set the function $g:\mathbb{R}_{+}\times \mathbb{R}\rightarrow \mathbb{R}$ such that $g(\beta,X)=f(X)$.
\begin{equation*}
\begin{array}{rl}
g:\mathbb{R}_{+}\times \mathbb{R}&\rightarrow \mathbb{R}\\
(\beta,X)&\mapsto f(X)
\end{array},\ \tilde{f}(X)=\left\{\begin{array}{ll}
\displaystyle g(\beta,X)& \textrm{if}\ t\in[t_{0}, T]\\
\displaystyle g(\beta(\frac{T}{t})^{\delta/p},X)& \textrm{if}\ t> T 
\end{array}
\right.
\end{equation*}   
Then the system \eqref{sir3} can be written in the following form:
\begin{equation}
\left\{\begin{array}{l}
\dot{X}=\tilde{f}(X(t))\\
X(t_{0})=X_{0}
\end{array}\right.
\label{gfsir2}
\end{equation}
\begin{theo}
	For any fixed parameters $\delta$ and $p$ for $\tilde{\beta}$ \eqref{beta1} or $\varphi$ for $\tilde{\beta}$ \eqref{beta3}, the system \eqref{gfsir2} with $\tilde{\beta}$ \eqref{beta1} or $\tilde{\beta}$ \eqref{beta3}, has a unique solution $X(t)$ for $t\geq t_{0}$. 
	\label{th2}
\end{theo}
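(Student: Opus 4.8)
\emph{Proof proposal.} The plan is to exploit the piecewise structure of $\tilde{\beta}$: solve \eqref{gfsir2} as two successive Cauchy problems, one on $[t_{0},T]$ where $\tilde{\beta}\equiv\beta$ and one on $[T,\infty)$, glue the two solutions together using the continuity of $\tilde{\beta}$ at $T$, and finally invoke the positive invariance of $E$ to upgrade local existence to global existence.

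First, on $[t_{0},T]$ the system \eqref{sir3} is literally \eqref{sir1}, since $\tilde{\beta}(t)=\beta$ there; thus Theorem \ref{th1} already gives a unique solution $X$ of \eqref{gfsir2} on $[t_{0},T]$. By the invariance estimate recalled below, $X$ stays in $E$, so $X_{T}:=X(T)\in E$.

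Next I would treat the Cauchy problem $\dot{X}=\tilde{f}(X(t))$, $X(T)=X_{T}$ on $[T,\infty)$. Here the right-hand side is $(t,X)\mapsto g(\tilde{\beta}(t),X)$, with $\tilde{\beta}(t)=\beta(T/t)^{\delta/p}$ in case \eqref{beta1} and $\tilde{\beta}(t)=\beta\exp(-\varphi(t-T))$ in case \eqref{beta3}. In both cases $t\mapsto\tilde{\beta}(t)$ is continuous (indeed $C^{\infty}$, since $t\geq T>0$) and satisfies $0\leq\tilde{\beta}(t)\leq\beta$, while for each fixed $b\geq 0$ the map $X\mapsto g(b,X)$ is polynomial, hence $C^{1}$ and locally Lipschitz in $X$ with a local Lipschitz constant depending continuously on $b$. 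Consequently the right-hand side is continuous in $t$ and locally Lipschitz in $X$, and the Cauchy--Lipschitz theorem for non-autonomous systems (cf. \cite{Perko:1991}) produces a unique maximal solution on some interval $[T,T_{\max})$. Since $\tilde{\beta}(T)=\beta$ for both branches, the two pieces match to first order at $t=T$, so their concatenation is a $C^{1}$ solution of \eqref{gfsir2} on $[t_{0},T_{\max})$; uniqueness on each of the two subintervals forces uniqueness of the concatenation.

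It remains to show $T_{\max}=\infty$. For this I would re-derive the a priori bounds of the ``Basic properties'' subsection verbatim, with the constant $\beta$ replaced by the nonnegative function $\tilde{\beta}(t)$: one still obtains $\dot{S}\leq 0$, $\dot{I}\geq-\nu I$, $\dot{I_{R}}\geq-\eta I_{R}$, $\dot{I_{U}}\geq-\eta I_{U}$, and, summing the four equations, $\dot{S}+\dot{I}+\dot{I_{R}}+\dot{I_{U}}\leq-\eta(I_{R}+I_{U})\leq 0$; hence the solution remains in the compact, positively invariant set $E$ for all $t$ in its interval of existence. A solution confined to a fixed compact set cannot escape in finite time, so $T_{\max}=\infty$, which completes the argument. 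The one delicate point --- the main obstacle --- is that $\tilde{f}$ is not jointly $C^{1}$ in $(t,X)$ across $t=T$, so one cannot simply quote a single global existence--uniqueness theorem as in the Remark after Theorem \ref{th1}; splitting the time interval at $T$ and using the continuity $\tilde{\beta}(T^{-})=\tilde{\beta}(T^{+})=\beta$ is precisely what circumvents this.
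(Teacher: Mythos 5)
Your argument is essentially the paper's own proof: both split the problem at $t=T$, solve the autonomous system on $[t_{0},T]$ via Theorem \ref{th1}, solve the second Cauchy problem with initial datum $X_{1}(T)$ on $t\geq T$, and concatenate the two pieces, with uniqueness on each subinterval giving uniqueness of the glued solution. Your version is somewhat more careful than the paper's --- you explicitly verify the non-autonomous Cauchy--Lipschitz hypotheses for the second piece (the paper simply re-invokes Theorem \ref{th1}, which strictly speaking covers only the autonomous case) and you supply the global-existence step $T_{\max}=\infty$ via confinement to the compact invariant set $E$, a point the paper leaves implicit.
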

\begin{proof}[Proof of Theorem \ref{th2}]
	The proof of the existence uniqueness of the system \eqref{gfsir2} with $\tilde{\beta}$ \eqref{beta1} and $\tilde{\beta}$ \eqref{beta3} are similar. Then, we do this for one.  \\	
	The function $\tilde{f}$ is piece-wise continuously differentiable in $\mathbb{R}_{+}\times E$. Then by using the theorem \ref{th1}, we have the existence of a unique solution $X_{1}(t)$ on $[t_{0}, T]$ of the system $\left\{\begin{array}{l}
	\displaystyle	\dot{X}=\displaystyle g(\beta,X(t))\\
	X(t_{0})=X_{0}
	\end{array}\right.$ 
	and for the parameters $\delta$ and $p$ fixed, there is a unique solution $X_{2}(t)$ for $t\geq T$ of the system 
	$\left\{\begin{array}{l}
	\displaystyle\dot{X}=\displaystyle g(\beta(\frac{T}{t})^{\delta/p},X(t))\\
	X(T)=X_{1}(T)
	\end{array}\right.$.\\
	Now defining the function $X(t)=\left\{\begin{array}{ll}
	X_{1}(t)& \textrm{if}\ t\in[t_{0}, T]\\
	X_{2}(t)& \textrm{if}\ t> T 
	\end{array}
	\right.$, we deduce that $X(t)$ is a unique solution continuous in time of the system \eqref{gfsir2}.  
\end{proof}
\begin{rem}
	We can do the same work for the system \eqref{sir3} with $\tilde{\beta}$ replaced by $\tilde{\beta}_{2}$ \eqref{beta4}. Then for any fixed parameters $\delta$, $p$ and $\phi$, the system has a unique solution $X(t)$ for $t\geq t_{0}$. 
\end{rem}
\subsection{Disease Free Equilibrium}
The unique equilibrium of the model \eqref{sir1} is the canonical DFE given by $(S_{0},0,0,0)$.  
\subsection{Parameters estimation}
The estimation of the parameters of the model \eqref{sir1} is done by using techniques in \cite{LMSW:2020}, \cite{Balde:2020}. We fit the cumulative data with an exponential function $TNI(t)=b\exp(c t)-a$. In addition, we assume that the cumulative function can be given in integral form as $TNI(t)=\displaystyle\gamma\nu\int_{t_{0}}^{t}I(s)ds$.\\ Then $TNI(t_{0})=TNI_{0}=b\exp(c t_{0})-a=0$. Thus, we obtain $\displaystyle t_{0}=\displaystyle\frac{\ln(a)-\ln(b)}{c}$.\\
Also we have:
\begin{equation}
I(t)=\dot{TNI}(t)=bc\exp(c t).
\end{equation} 
Then $\displaystyle I(t_{0})=\frac{bc}{\gamma\nu}\exp(c t_{0})=\frac{ac}{\gamma\nu}=I_{0}$ and $\displaystyle\frac{I(t)}{I(t_{0})}=\exp(c(t-t_{0}))$. 
Hence, we obtain
\begin{equation}
I(t)=I(t_{0})\exp(c(t-t_{0})),
\end{equation}
then $\dot{I}(t)=cI(t)$ and $\dot{I}(t_{0})=cI(t_{0})$.\\ 
Let's set $\delta_{1}$ and $\delta_{2}$ such that $I_{1}=\delta_{1} I$ and $I_{2}=\delta_{2} I$. 
Then replacing in the second an third equation of the following system: 
\begin{equation}
\label{modelinf}
\left\{\begin{array}{lcl}
\dot I &=&\beta \,S\,(I+I_{U})-\nu\,I    \\
\dot I_R&=& \gamma\nu \,I - \eta\,I_R\\
\dot I_U &=&  (1-\gamma)\nu \,I - \eta\,I_U, \\
\end{array}\right.
\end{equation}
we obtain 
\begin{align}
\delta_{1}=\frac{\gamma\nu}{c+\eta}=\frac{I_{R0}}{I_{0}}&\\
\delta_{2}=\frac{(1-\gamma)\nu}{c+\eta}=\frac{I_{U0}}{I_{0}}&
\label{delta}
\end{align}
Then introducing \eqref{delta} in the first equation of \eqref{modelinf}, we obtain:
\begin{equation*}
c+\nu=\beta \,S_{0}\,(1+\delta_{2}).
\end{equation*}  
Hence 
\begin{align}
\label{beta}
\beta=\frac{c+\nu}{S_{0}\,(1+\delta_{2})}.
\end{align}
Replacing \eqref{delta} in \eqref{beta}, we obtain:
\begin{align}
\beta=\frac{(c+\nu)(c+\eta)}{S_{0}\,(c+\eta+(1-\gamma)\nu)}.
\end{align}
\subsection{The basic reproductive number $\mathcal{R}_{0}$}
To compute the $\mathcal{R}_{0}$, we use \cite{DriesscheWatmough:2002}. We consider the linearized infected equations at the Disease Free Equilibrium (DFE) $(S_{0},0,0,0,0,0)$, in the system \eqref{sir2}:
\begin{equation}
\left\{ \begin{array}{l}
\displaystyle \frac{dI}{dt}=\beta S_{0}( I(t)+I_{U}(t)) - \nu I(t)\\\\
\displaystyle \frac{dI_{U}}{dt}=(1-\gamma)\nu I(t)-\eta I_{U}(t)\\
\end{array}
\right.
\label{sir4}
\end{equation} 
Extracting the matrix:
\begin{align*} 
M=&\left[\begin{array}{cc}
\beta S_{0}-\nu & \beta S_{0}\\
(1-\gamma)\nu & -\eta\\
\end{array}\right]\\\\
=&\left[\begin{array}{cc}
\beta S_{0} & \beta S_{0}\\
(1-\gamma)\nu & 0\\
\end{array}\right]-\left[\begin{array}{cc}
\nu & 0\\
0 & \eta\\
\end{array}\right]
\end{align*}   
Then the next generation is given by:
\begin{align*} 
FV^{-1}=&\left[\begin{array}{cc}
\beta S_{0}/\nu & \beta S_{0}/\eta\\
(1-\gamma) & 0\\
\end{array}\right]
\end{align*}   
Finally the spectral radius $\rho(FV^{-1})$ gives:
\begin{equation} 
\mathcal{R}_{0}=\frac{\beta S_{0}}{\nu}(1+\frac{(1-\gamma)\nu}{\eta}).
\end{equation}

\section{Conclusion and perspectives}
\label{sec:conclusion}
In this paper, we have analyzed the impact of anti-pandemic measures in Senegal. First, we used techniques of fitting function to the data of the total number of cases. The choice of the data fit function is crucial for the results  since it allows the estimation of the parameters of the compartmental model used. In a second work, we use machine learning tools to also predict the future evolution of the pandemic in Senegal. Also, we have integrated the effects of the measures into the prediction.\\
Depending on the measures taken, the pandemic's trajectory may become slower or lose its exponential nature.
It would be interesting, in the following, to use other functions of a slow nature like the logistical function to fit data and thus obtain different results. A stochastic study using Brownian motions applied to the SIRU compartmental model would also be interesting.

\section*{Acknowledgement}
The authors thanks the Non Linear Analysis, Geometry and Applications (NLAGA) Project for supporting this work.

\section*{Conflict of interest}
The author declares no conflict of interest.

\end{document}